\newtheorem{theorem}{Theorem}
\newtheorem{corollary}{Corollary}
\tikzset{ basic/.style = {draw, text width=2cm, font=\sffamily,
    rectangle}, level 2/.style = {basic, rounded corners=6pt, thin,
    align=center, fill=green!60, text width=8em}, level 3/.style =
  {basic, rounded corners=6pt, thin, align=center, fill=orange!60,
    text width=8em}, level 4/.style =
  {basic, rounded corners=6pt, thin, align=center, fill=yellow!60,
    text width=12em}, level 5/.style =
  {basic, rounded corners=6pt, thin, align=center, fill=yellow!60,
    text width=8em} }
\title{Fourier Series Formalization in ACL2(r)}
\author{Cuong K. Chau
\institute{Department of Computer Science\\
The University of Texas at Austin\\
Austin, TX, USA}
\email{ckcuong@cs.utexas.edu}
\and
Matt Kaufmann
\institute{Department of Computer Science\\
The University of Texas at Austin\\
Austin, TX, USA}
\email{kaufmann@cs.utexas.edu}
\and
Warren A. Hunt, Jr.
\institute{Department of Computer Science\\
The University of Texas at Austin\\
Austin, TX, USA}
\email{hunt@cs.utexas.edu}
}
\begin{document}
\maketitle

\begin{abstract}
We formalize some basic properties of Fourier
series in the logic of ACL2(r), which is a variant of ACL2 that
supports reasoning about the real and complex numbers by way of
non-standard analysis. More specifically, we extend a framework for
formally evaluating definite integrals of real-valued, continuous
functions using the Second Fundamental Theorem of Calculus.
Our extended framework is also applied to functions
containing free arguments. Using this framework, we are able to prove
the orthogonality relationships between trigonometric functions, which
are the essential properties in Fourier series analysis. The sum rule
for definite integrals of indexed sums is also formalized by applying
the extended framework along with the First Fundamental Theorem of
Calculus and the sum rule for differentiation. The Fourier
coefficient formulas of periodic functions are then formalized from
the orthogonality relations and the sum rule for integration.
Consequently, the uniqueness of Fourier sums is a straightforward
corollary.

We also present our formalization of the sum rule for definite
integrals of infinite series in ACL2(r). Part of this task is to prove
the Dini Uniform Convergence Theorem and the continuity of a limit
function under certain conditions. A key technique in our proofs of
these theorems is to apply the {\it overspill principle} from
non-standard analysis.


\end{abstract}

\section{Introduction}
\label{sec:intro}

In this paper, we present our efforts in formalizing some basic
properties of Fourier series in the logic of ACL2(r), which is a
variant of ACL2 that supports reasoning about the real and complex
numbers via non-standard analysis \cite{gamboa:1999, gamboa:2001}. In
particular, we describe our formalization of the Fourier coefficient
formulas for periodic functions and the sum rule for definite
integrals of infinite series. The formalization of Fourier series will
enable interactive theorem provers to reason about systems modeled by
Fourier series, with applications to a wide variety of problems in
mathematics, physics, electrical engineering, signal processing, and
image processing.

We do not claim to be developing new mathematics.  However, as far as
we know the mechanized formalizations and proofs presented in this
paper are new.  The research contributions of this paper are twofold:
a demonstration that a mechanized proof assistant, in particular
ACL2(r), can be used to verify properties of Fourier series; and
infrastructure to support that activity, which we expect to be
reusable for future ACL2(r) verifications of continuous
mathematics. Our formalizations presented in the paper assume that
there exists a Fourier series, i.e., a (possibly infinite) sum of
sines and cosines for any periodic function. Future work could include
proving convergence of the Fourier series for any suitable periodic
function.

The proofs of {\it Fourier coefficient formulas} depend on the {\it
orthogonality relationships between trigonometric functions} and the
{\it sum rule for integration of indexed sums}.  A key tool for
proving these properties is the Second Fundamental Theorem of Calculus
(FTC-2).  Cowles and Gamboa \cite{cowles:2014} implemented a framework
for formally evaluating definite integrals of real-valued continuous
functions using FTC-2.  However, their framework is restricted to
unary functions, while formalizing Fourier coefficient formulas
requires integration for indexed families of functions $f_n(x)$, which
we represent as $f(x, n)$.  We call such $n$ a {\it free argument}.
Hence, we extend the FTC-2 framework of Cowles and Gamboa to apply to
functions with free arguments. We call the extended framework the {\it
FTC-2 evaluation procedure}. One may expect the usual ACL2 {\it
functional instantiation} mechanism to apply, by using pseudo-lambda
expressions~\cite{acl2-doc:lemma-instance} to handle the free
arguments.  However, in ACL2(r) there are some technical issues and
restrictions on the presence of free arguments in functional
substitutions, which make functional instantiation not
trivial \cite{gamboa:2007}. We describe these issues in detail and
show how we deal with them in Section \ref{sec:procedure}. Once the
FTC-2 evaluation procedure is built, we can use it to prove the
orthogonality relationships between trigonometric functions. The sum
rule for definite integrals of indexed sums is also formalized by
applying the FTC-2 evaluation procedure along with the First
Fundamental Theorem of Calculus (FTC-1) and the sum rule for
differentiation. The Fourier coefficient formulas for periodic
functions are then verified using the orthogonality relations and the
sum rule for integration.  Consequently, the {\it uniqueness of
Fourier sums} is a straightforward corollary of the Fourier
coefficient formulas.

The other main contribution of our work is the formalization of the
{\it sum rule for definite integrals of infinite series} under two
different conditions. This problem deals with the {\it convergence}
notion of a sequence of functions. We consider two types of
convergence: {\it pointwise convergence} and {\it uniform
convergence}. Our formalization requires that a sequence of partial
sums of real-valued continuous functions converges uniformly to a {\it
continuous limit function} on the interval of interest. We approach
this requirement in two ways, corresponding to two different
conditions. One way is to prove that if a sequence of continuous
functions converges pointwise on a closed and bounded interval, then
it converges uniformly on that interval, given that the sequence is
monotonic and the limit function is continuous. This is known as the
{\it Dini Uniform Convergence Theorem} \cite{luxemburg:1971}. Another
way is to prove that if a sequence of continuous functions is not
required to be monotonic but converges uniformly to some limit
function on the interval of interest, then the limit function is also
continuous on that interval. A key technique in our proofs for both
cases is to apply the {\it overspill principle} from non-standard
analysis \cite{goldblatt:1998, keisler:1976}. Thus, we also formalize
the overspill principle in ACL2(r) and apply this principle to prove
Dini's theorem and the continuity of the limit function as mentioned.




\begin{center}

\tikzstyle{l} = [draw, -latex']

\begin{tikzpicture}[
  level 1/.style={sibling distance=40mm},
  edge from parent/.style={->,draw},
  >=latex]

\node [level 2] (n0) {FTC-2};
\node [level 2, right of = n0, node distance = 11em] (n1) {FTC-1};
\node [level 2, below of = n0, node distance = 4em] (n2)
{Orthogonality relations};
\node [level 2, below of = n1, node distance = 4em] (n3)
{Sum rule for integration};

\path [l] (n0) -- (n2);
\path [l] (n0) -- (n3);
\path [l] (n1) -- (n3);

\node[level 4, below right of = n2, node distance = 8em] (n4)
{Fourier coefficient formulas};

\path [l] (n2) -- (n4);
\path [l] (n3) -- (n4);

\node[level 2, below of = n4, node distance = 4em] (n5)
{Uniqueness of Fourier sums};

\path [l] (n4) -- (n5);

\node[level 3, right of = n1, node distance = 11em] (n6)
{Overspill principle};
\node[level 5, right of = n4, node distance = 16.5em] (n8)
{Sum rule for integration of infinite series};

\path [l] (n3) -- (n8);
\path [l] (n6) -- (n8);

\end{tikzpicture}

\end{center}

\noindent
Figure 1. Overview of the formalization of the Fourier coefficient
formulas and the sum rule for definite integrals of infinite series.

\medskip

Figure 1 gives an overview of the work presented in this
paper. The remainder of the paper is organized as follows.
Section \ref{sec:acl2r} reviews some basic notions of non-standard
analysis in ACL2(r) that we use later in the paper.
Section \ref{sec:ftc} reviews two versions of the Fundamental Theorem
of Calculus that we extend to support our Fourier series
formalization. Section \ref{sec:procedure} describes the FTC-2
evaluation procedure as an extended framework for applying FTC-2 to
functions with free arguments. The formalization of the orthogonality
relations for trigonometric functions and the sum rule for definite
integrals of indexed sums are described in
Sections \ref{sec:orthog-relations}
and \ref{sec:sum-rule-for-integration} respectively. The formalization
of the Fourier coefficient formulas and the uniqueness of Fourier sums
are described in Section \ref{sec:fourier-coefficients}. The preceding
results apply to finite series, but as we look ahead to dealing with
infinite Fourier series, we take a step in Section \ref{sec:int-inf},
which presents our formalization of the sum rule for definite
integrals of infinite series. Finally, Section \ref{sec:conclusions}
concludes the paper and points out some possible future work.

\section{Basic Non-Standard Analysis Notions in ACL2(r)}
\label{sec:acl2r}
Here we review basic notions of non-standard analysis in ACL2(r)
that are used in the remainder of this paper. All notions introduced here are
considered {\it non-classical}, while functions whose definitions do
not depend on any of these notions are {\it classical}.
Let $x$ be a real number.

\begin{itemize}

\item A primitive notion is that $x$ is {\it standard}, which
  intuitively means that $x$ is a ``traditional'' real number.  In
  particular, $x$ is standard if it can be defined. For example, 1,
  -2, 3.65, $\pi$, $e^5$, and $\sqrt{2}$ are standard. A natural
  number is considered standard if it is finite, otherwise it is
  non-standard. We will refer to the standard notion of natural
  numbers when stating the overspill principle in
  Section~\ref{sec:int-inf}. We feel free to {\it relativize} our
  quantifiers. For example, ``$\forall^{st} n \ldots$'' means ``for
  all standard $n \ldots$'', and ``$\exists^{\neg st} n \ldots$''
  means ``there exists non-standard $n \ldots$''.

\item $x$ is {\tt i-small} ({\it infinitesimal}) iff
  $|x| < r$ for all positive standard reals $r$.

\item $x$ is {\tt i-large} iff $|x| > r$ for all positive
  standard reals $r$.

\item $x$ is {\tt i-limited} ({\it finite}) iff $|x| < r$ for some
  positive standard real r.

\item $x$ is {\tt i-close} ($\approx$) to a real $y$ iff $(x - y)$ is
  {\tt i-small}.

\item Suppose $x$ is {\tt i-limited}.  Then {\tt standard-part}($x$), or simply
  st($x$), is the {\it unique standard real} that is {\tt i-close} to
  $x$.

\end{itemize}















\section{Fundamental Theorem of Calculus}
\label{sec:ftc}
This section reviews two versions of the Fundamental Theorem of
Calculus that we need to extend to functions with
free arguments, as part of our Fourier series formalization. The
two versions are sometimes called the First and Second Fundamental
Theorem of Calculus.

{\bf First Fundamental Theorem of Calculus (FTC-1):}
Let $f$ be a real-valued continuous function on the interval $[a,
b]$. We can then define
a corresponding function $g(x)$ as follows: $g(x) = \int_a^x f(t) dt$.
Then $g'(x) = f(x)$ for all $x \in [a, b]$.

{\bf Second Fundamental Theorem of Calculus (FTC-2):}
If $f$ is a real-valued continuous function on $[a, b]$ and $g$ is an
antiderivative of $f$ on $[a, b]$, i.e., $g'(x) = f(x)$ for all $x \in
[a, b]$, then
\[
\int_{a}^b f(x)dx = g(b) - g(a).
\]

In the next two sections, we extend FTC-2 to
functions with free arguments and apply it to prove the orthogonality
relations of trigonometric functions, respectively. The extension of
FTC-1 and its application to the sum rule for definite
integrals of indexed sums is described in Section
\ref{sec:sum-rule-for-integration}.

\section{FTC-2 Evaluation Procedure}
\label{sec:procedure}
This section describes how we apply the FTC-2 theorem to evaluate
definite integrals of real-valued continuous functions $f$ in terms of
their antiderivatives $g$, even when $f$ and $g$ contain {\it free
arguments}, that is, arguments other than the variable with respect to
which we perform integration or differentiation. In particular, we
extend the existing FTC-2 framework \cite{cowles:2014} to
functions with free arguments, and call the extended
framework the {\it FTC-2 evaluation procedure}. This procedure
consists of the following steps:

\begin{itemize}

\item Prove that $f$ returns real values on $[a, b]$.

\item Prove that $f$ is continuous on $[a, b]$.

\item Specify a real-valued antiderivative $g$ of $f$ and prove that
  $f$ is the derivative of $g$ on $[a, b]$; i.e., prove that $g$
  returns real values and $g'(x) = f(x)$ for all $x \in$ $[a, b]$.

\item Formalize the integral of $f$ on $[a, b]$ as the Riemann integral.

\item Evaluate the integral of $f$ on $[a, b]$ in terms of
  $g$ by applying the FTC-2 theorem.

\end{itemize}

The first two steps are trivial in comparison to the last three.  In
the following subsections, we describe the challenges manifest in the
last three steps and how we tackle them.

\subsection{Automatic Differentiator}

In order to apply the FTC-2 evaluation procedure to evaluate the
definite integral of a function $f$, we need to specify and prove the
correctness of a real-valued antiderivative $g$ of $f$. The specifying
task can be done by appealing to a computer algebra system such as
{\it Mathematica} \cite{mathematica:2015}. Notably, we must
mechanically check in ACL2(r) that $f$ is indeed the derivative of
$g$. Fortunately, we don't have to prove this manually for every
function. An {\it automatic differentiator} (AD) implemented by Reid
and Gamboa \cite{reid:itp-2011, reid:acl2-2011} symbolically computes
the derivative $f$ of the input function $g$ and automatically derives
a proof demonstrating the correctness of the differentiation, i.e.,
automatically proves the following formula:
\[
f(x) \approx \frac{g(x) - g(y)}{x - y},
\]

\noindent for all $x$ and $y$ in the domain of $g$ such that $x$ is standard,
$x \approx y$, but $x \neq y$. For example, the user can employ the AD
to prove that $f(x) = n \cos(nx)$ is the derivative of $g(x)
= \sin(nx)$ with respect to $x$, by calling the
macro {\tt defderivative} with the input function $g$ as follows:
\begin{Verbatim}
(defderivative sine-derivative
  (acl2-sine (* n x)))
\end{Verbatim}

\noindent The following theorem is then introduced and proved automatically:
\begin{Verbatim}
(defthm sine-derivative
  (implies (and (acl2-numberp x)
                (acl2-numberp (* n x))
                (acl2-numberp y) 
                (acl2-numberp (* n y))
                (standardp x)
                (standardp n) (acl2-numberp n)
                (i-close x y) (not (equal x y)))
           (i-close (/ (- (acl2-sine (* n x))
                          (acl2-sine (* n y)))
                       (- x y))
                    (* (acl2-cosine (* n x))
                       (+ (* n 1) (* x 0))))))
\end{Verbatim}

The AD requires using the symbol $x$ as the name of the variable with
respect to which the (partial) derivative is computed.
Notice that the hypotheses {\tt (acl2-numberp (* n x))} and
{\tt (acl2-numberp (* n y))} in the above theorem are redundant
since they can be implied
from the set of hypotheses {\tt (acl2-numberp x)},
{\tt (acl2-numberp y)} and {\tt (acl2-numberp n)}. In addition,
the above theorem states that the derivative of $\sin(nx)$ is
$\cos(nx)(n*1 + x*0)$, which indeed equals $n \cos(nx)$. This AD
does not perform such simplifications. Nevertheless, the user can
easily prove the desired theorem from the one generated by the
macro {\tt defderivative}.

\subsection{Formalizing the Riemann Integral with Free Arguments}
\label{subsec:riemann-int}

We formalize the definite integral of a function as the Riemann
integral, following the same method as implemented by
Kaufmann \cite{kaufmann:2000}. When functions contain free arguments,
this formalization encounters a problem with functional instantiations
of non-classical theorems containing these functions. We will describe
the problem in detail and how we deal with it. Let's consider the
following definition of the Riemann integral of a {\it unary}
function, which uses an ACL2(r) utility, {\tt
defun-std}~\cite{gamboa:2007}, for introducing classical functions
defined in terms of non-classical functions. Note that {\tt defun-std}
defines a function which is only guaranteed to satisfy its definition
on standard inputs.
\begin{Verbatim}
(defun-std strict-int-f (a b)
  (if (and (inside-interval-p a (f-domain))
           (inside-interval-p b (f-domain))
           (< a b))
      (standard-part (riemann-f (make-small-partition a b)))
    0))
\end{Verbatim}

\noindent The form above introduces the Riemann integral of a
function $f$ as a classical function, even though it contains two
non-classical functions, {\tt standard-part} and {\tt
make-small-partition}\footnote{We use the non-classical function {\tt
make-small-partition} to partition a closed and bounded interval into
subintervals each of infinitesimal length.}. The
proof obligation here is to prove the integral returns standard values
with standard inputs.  More specifically, we need to prove that the
standard part of the Riemann sum of $f$, for any partition of
$[a, b]$ with standard endpoints into infinitesimal-length subintervals,
returns standard values. This is
true only if that Riemann sum is limited.  In fact, for a generic
real-valued continuous {\it unary} function {\tt
rcfn}, this limited property was proven for a corresponding Riemann
sum, as follows~\cite{kaufmann:2000}.
\begin{Verbatim}
(defthm limited-riemann-rcfn-small-partition
  (implies (and (standardp a)
                (standardp b)
                (inside-interval-p a (rcfn-domain))
                (inside-interval-p b (rcfn-domain))
                (< a b))
           (i-limited (riemann-rcfn (make-small-partition a b)))))
\end{Verbatim}

We are now interested in extending the above theorem for functions
containing free arguments using functional instantiation with
pseudo-lambda expressions.  Unfortunately, free arguments are not
allowed to occur in pseudo-lambda expressions in the functional
substitution since the theorem we are trying to instantiate is
non-classical and the functions we are trying to instantiate are
classical; the following example shows why this requirement is
necessary~\cite{gamboa:2007}. For an arbitrary classical function
$f(x)$, the following is a theorem.
\[
{\it standardp}(x) \Rightarrow {\it standardp}(f(x))
\]

\noindent Substitution of $\lambda (x).(x + y)$ for $f$ into the above
formula yields the formula
\[
{\it standardp}(x) \Rightarrow {\it standardp}(x + y)
\]

\noindent which is not valid, since the free argument {\it y} can be
non-standard.

Instead of using functional instantiation, we prove the limited
property of Riemann sums (as discussed above) from scratch by applying
the following theorem.

\begin{theorem}[The boundedness of Riemann sums \cite{kaufmann:2000}]
\label{thm:riemann-sum}
Assume that there exist finite values m and M such that
\[
m \leq f(t) \leq M, \mbox{ for all } t \in [a, b].
\]

\noindent Then the Riemann sum of f over $[a, b]$ with any partition
P = $\{x_0,x_1,\ldots,x_n\}$
is bounded by
\[
m(b - a) \leq \sum_{i=1}^n f(t_i)(x_i - x_{i-1}) \leq M(b - a)
\]

\noindent where $t_i \in [x_{i-1}, x_i]$, $x_0 = a$, and $x_n = b$.
\end{theorem}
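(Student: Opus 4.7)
My plan is to prove the two inequalities by induction on the length of the partition $P$, using the pointwise bound on $f$ together with the non-negativity of each subinterval width $(x_i - x_{i-1})$ and a telescoping identity.

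The first step is to observe that, since $P = \{x_0, x_1, \ldots, x_n\}$ is a partition of $[a,b]$ with $a = x_0 < x_1 < \cdots < x_n = b$, each width $(x_i - x_{i-1})$ is strictly positive. Combined with the hypothesis $m \le f(t_i) \le M$ at each chosen sample point $t_i \in [x_{i-1}, x_i]$, multiplying through preserves the inequality, giving
\[
m\,(x_i - x_{i-1}) \;\le\; f(t_i)\,(x_i - x_{i-1}) \;\le\; M\,(x_i - x_{i-1})
\]
for every $i$ from $1$ to $n$. This is the per-term bound that the induction will propagate.

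Next I would carry out an induction on $n$ (or, equivalently, on the list structure of the partition, which matches the ACL2 representation of Riemann sums as a recursion over a partition list). In the base case, either $n = 0$ (empty partition) and both sides reduce to $0$, or $n = 1$ and the inequality follows directly from the per-term bound above with $(x_1 - x_0) = (b - a)$. For the inductive step, assuming the inequality holds for the Riemann sum over the first $n-1$ subintervals (whose total length telescopes to $x_{n-1} - x_0$), I add the $n$-th term to both sides, apply the per-term bound, and combine to obtain
\[
m\,(x_{n-1} - x_0) + m\,(x_n - x_{n-1}) \;\le\; \textstyle\sum_{i=1}^n f(t_i)(x_i - x_{i-1}) \;\le\; M\,(x_{n-1} - x_0) + M\,(x_n - x_{n-1}),
\]
which simplifies via $x_0 = a$ and $x_n = b$ to the desired bound $m(b-a) \le \sum f(t_i)(x_i - x_{i-1}) \le M(b-a)$.

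The main obstacle will not be the mathematical content, which is elementary, but rather the bookkeeping in ACL2: the Riemann sum is defined by recursion over a partition list that carries the current left endpoint implicitly, so the induction hypothesis has to be stated in a form general enough to match the recursive structure (for example, as a bound of the form $m \cdot (\textrm{last} - \textrm{first}) \le \textrm{sum} \le M \cdot (\textrm{last} - \textrm{first})$ over an arbitrary sub-partition). Once the correct generalization is identified and the invariant that the partition points are strictly increasing is available, the induction step should reduce to arithmetic reasoning that ACL2's linear arithmetic tactics can dispatch. I would also rely on the fact that this theorem is classical (no non-standard notions appear in its statement), so functional instantiation with pseudo-lambda expressions poses no obstacle here, unlike in the limited-Riemann-sum theorem that motivates it.
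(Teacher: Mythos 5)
Your proof is correct, and it matches the standard argument: the paper itself states this theorem without proof, citing Kaufmann's earlier formalization, where the result is indeed established exactly as you describe --- a per-term bound $m\,(x_i - x_{i-1}) \le f(t_i)\,(x_i - x_{i-1}) \le M\,(x_i - x_{i-1})$ propagated by induction over the partition list, with the subinterval widths telescoping to $b - a$. Your remarks about the ACL2 bookkeeping (generalizing the induction hypothesis to an arbitrary sub-partition and using the increasing-points invariant) and about the theorem being classical are both accurate and on point.
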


%
%

From Theorem \ref{thm:riemann-sum}, proving the Riemann sum of $f$
over $[a, b]$ is bounded reduces to proving $f$ is bounded on that
interval.  Given a {\it specific} real-valued continuous function $f$,
it is usually straightforward to specify the bounds of $f$ on a closed
and bounded interval.  The problem becomes more challenging when
applying to {\it generic} real-valued continuous functions since it is
impossible to find either their minimum or maximum.  However, the
boundedness of these functions on a closed and bounded interval still
holds by the {\it extreme value theorem}.  But again, this was just
proven for {\it unary} functions \cite{cowles:2014}.  We also want to
apply this property to functions with free arguments. Our solution at
this point is to re-prove the extreme value theorem and consequently
the limited property of Riemann sums for generic functions with free
arguments.  Since the number of free arguments is varied, it would be
troublesome to prove the same properties independently for
each number of free arguments.  Indeed, we just need to add only one
{\it extra} argument representing a list of the free arguments to the
constrained functions and re-prove the concerned non-classical
theorems. The necessary hypotheses for the extra argument can be added
throughout the proof development. Note that
non-classical theorems proven for the new constrained functions with
only one extra argument added can also be derived for functions with
an arbitrary number of free arguments, using functional and ordinary
instantiation. (See lemmas {\tt
limited-riemann-f-small-partition-lemma} and {\tt
limited-riemann-f-small-partition} below for an example of how this
works.)  The question is how can we avoid the problem of the
appearance of free arguments in functional instantiations of
non-classical theorems as described above?  The trick is to treat the
extra argument in the constrained functions as a list of free
arguments.  Thus, no free argument appears in the functional
instantiations.  To illustrate the proposed technique, let us
investigate the constrained function {\tt rcfn-2} below.  It contains
one main argument {\tt x} and one extra argument {\tt arg}.
\begin{Verbatim}
(encapsulate
 ((rcfn-2 (x arg) t)
  (rcfn-2-domain () t))

 ;; Our witness real-valued continuous function is the
 ;; identity function of x. We ignore the extra argument arg.

 (local (defun rcfn-2 (x arg) (declare (ignore arg)) (realfix x)))
 (local (defun rcfn-2-domain () (interval nil nil)))

 ... ;; Non-local theorems about rcfn-2 and rcfn-2-domain
 )
\end{Verbatim}

We then prove the extreme value theorem for {\tt rcfn-2} and
consequently the limited property of the Riemann sum of {\tt rcfn-2},
using the same proofs for the case of {\it unary} function {\tt rcfn}
existing in the ACL2 community books \cite{acl2-books}, file {\tt
books/nonstd/integrals/continuous-function.lisp}. The limited property
of the Riemann sum of {\tt rcfn-2} is stated as follows:
\begin{Verbatim}
(defthm limited-riemann-rcfn-2-small-partition
  (implies (and (standardp arg)
                (standardp a)
                (standardp b)
                (inside-interval-p a (rcfn-2-domain))
                (inside-interval-p b (rcfn-2-domain))
                (< a b))
           (i-limited (riemann-rcfn-2 (make-small-partition a b) arg))))
\end{Verbatim}

As claimed, the above non-classical theorem can also be applied to
functions with an arbitrary number of free arguments, using the trick
we describe in the following example. In this example, the function
$f(x, m, n)$ contains two free arguments $m$ and $n$. Then, the
parameter {\tt arg} in the above theorem should be considered as the
list {\tt (list m n)}. Having said that, we first need to
prove a lemma stating that {\it every element in a standard list is
standard}. This can be proven easily by using {\tt
defthm-std} \cite{gamboa:2007}.
\begin{Verbatim}
(defthm-std standardp-nth-i-arg
  (implies (and (standardp arg)
                (standardp i))
           (standardp (nth i arg)))
  :rule-classes (:rewrite :type-prescription))
\end{Verbatim}

The functional instantiation with pseudo-lambda expressions can now be
applied to prove the limited property of the Riemann sum of $f$
as follows.
\begin{Verbatim}
(1)  (defthm limited-riemann-f-small-partition-lemma
       (implies (and (standardp arg)
                     (standardp a)
                     (standardp b)
                     (inside-interval-p a (f-domain))
                     (inside-interval-p b (f-domain))
                     (< a b))
                (i-limited (riemann-f (make-small-partition a b)
                                      (nth 0 arg)
                                      (nth 1 arg))))
       :hints (("Goal"
                :by (:functional-instance
                     limited-riemann-rcfn-2-small-partition
                     (rcfn-2 (lambda (x arg)
                               (f x (nth 0 arg) (nth 1 arg))))
                     (rcfn-2-domain f-domain)
                     (map-rcfn-2
                      (lambda (p arg)
                        (map-f p (nth 0 arg) (nth 1 arg))))
                     (riemann-rcfn-2
                      (lambda (p arg)
                        (riemann-f p (nth 0 arg) (nth 1 arg))))))))
\end{Verbatim}

Note that the functional instantiation in the above lemma does not
contain any free arguments. However, this lemma constrains the two
free arguments to be members of a list.  In order to eliminate this
constraint, we need a lemma stating that {\it a list of length two is
standard if both of its elements are standard}.  Again, we can prove
this using {\tt defthm-std}.
\begin{Verbatim}
(defthm-std standardp-list
  (implies (and (standardp m)
                (standardp n))
           (standardp (list m n)))
  :rule-classes (:rewrite :type-prescription))
\end{Verbatim}

We are finally able to prove the desired theorem as an instance of the
lemma (1).
\begin{Verbatim}
(defthm limited-riemann-f-small-partition
  (implies (and (standardp m)
                (standardp n)
                (standardp a)
                (standardp b)
                (inside-interval-p a (f-domain))
                (inside-interval-p b (f-domain))
                (< a b))
           (i-limited (riemann-f (make-small-partition a b) m n)))
  :hints (("Goal"
              :use (:instance limited-riemann-f-small-partition-lemma
                              (arg (list m n))))))
\end{Verbatim}

\subsection{Applying FTC-2 to Functions with Free Arguments}

The FTC-2 theorem was stated and proven in the ACL2 community books for generic
{\it unary} functions as follows \cite{cowles:2014}:
\begin{Verbatim}
(defthm ftc-2
  (implies (and (inside-interval-p a (rcdfn-domain))
                (inside-interval-p b (rcdfn-domain)))
           (equal (int-rcdfn-prime a b)
                  (- (rcdfn b) (rcdfn a)))))
\end{Verbatim}

Again, we would like to apply this theorem for functions with free
arguments via functional instantiation. Since this theorem is
classical, free arguments are allowed to occur in pseudo-lambda
expressions
of a functional substitution as long as classicalness is
preserved \cite{gamboa:2007}. Through functional instantiation with
pseudo-lambda terms, we encounter several proof obligations that require
free arguments to be standard. Unfortunately, attempting to
add this assumption to pseudo-lambda terms, e.g.,
{\tt (lambda (x) (if (standardp n) (f x n) (f x 0)))}, is not allowed
in ACL2(r) since the terms become non-classical by using the
non-classical function {\tt standardp}, violating the classicalness
requirement. To deal with this issue of functional instantiation, we
propose a technique using an encapsulate event with {\it zero-arity
classical functions} (constants) representing free arguments. Since
the zero-arity functions are classical, they must return standard
values. Using this technique, we can instantiate the FTC-2 theorem to
evaluate the definite integral of a function containing free arguments
in terms of its antiderivative. For example, suppose we want to apply
the FTC-2 theorem to a real-valued continuous function $f(x, n)$,
where $n$ is a free argument of type integer. Also suppose that $g$
is an antiderivative of $f$. Our proposed technique consists of four
steps as described below:

\begin{itemize}

\item Step 1: Define an encapsulate event that introduces zero-arity
classical function(s) representing free argument(s).
\begin{Verbatim}
(encapsulate
 (((n) => *))
 (local (defun n () 0))
 (defthm integerp-n
   (integerp (n))
   :rule-classes :type-prescription))
\end{Verbatim}

\item Step 2: Prove that the zero-arity classical function(s) return
   standard values using {\tt defthm-std}.
\begin{Verbatim}
(defthm-std standardp-n
  (standardp (n))
  :rule-classes (:rewrite :type-prescription))
\end{Verbatim}

\item Step 3: Prove the main theorem, modified by replacing
the free argument(s) with the corresponding zero-arity function(s)
introduced in step 1. Without free argument(s), the functional
instantiation can be applied straightforwardly.
\begin{Verbatim}
(defthm f-ftc-2-lemma
  (implies (and (inside-interval-p a (g-domain))
                (inside-interval-p b (g-domain)))
           (equal (int-f a b (n))
                  (- (g b (n))
                     (g a (n)))))
  :hints (("Goal"
            :by (:functional-instance
                  ftc-2
                  (rcdfn
                   (lambda (x) (g x (n))))
                  (rcdfn-prime
                   (lambda (x) (f x (n))))
                  (rcdfn-domain g-domain)
                  ... ;; Instantiate other constrained
                      ;; functions similarly. 
                  (int-rcdfn-prime
                   (lambda (a b) (int-f a b (n))))))))
\end{Verbatim}

\item Step 4: Prove the main theorem by functionally instantiating
the zero-arity function(s) in the lemma introduced in step 3 with the
corresponding free argument(s).
\begin{Verbatim}
(defthm f-ftc-2
  (implies (and (integerp n) ;; we assume the type of n is integer.
                (inside-interval-p a (g-domain))
                (inside-interval-p b (g-domain)))
           (equal (int-f a b n)
                  (- (g b n)
                     (g a n))))
  :hints (("Goal"
           :by (:functional-instance f-ftc-2-lemma
                                     (n (lambda ()
                                          (if (integerp n) n 0)))))))
\end{Verbatim}

\end{itemize}

\section{Orthogonality Relations of Trigonometric Functions}
\label{sec:orthog-relations}
By applying the FTC-2 evaluation procedure, we can mechanically prove
in ACL2(r) the orthogonality relations of trigonometric functions,
which are the essential properties in Fourier series analysis. The
orthogonality relations of trigonometric functions are a collection of
definite integral formulas for sine and cosine functions as described
below:
\numberwithin{equation}{section}
\begin{equation} \label{sines-orthog}
\int_{-L}^L \sin(m \frac{\pi}{L} x) \sin(n \frac{\pi}{L} x)dx =
\begin{cases}
0, \mbox{ if } m \neq n \lor m = n = 0 \\
L, \mbox{ if } m = n \neq 0
\end{cases}
\end{equation}
\begin{equation} \label{cosines-orthog}
\int_{-L}^L \cos(m \frac{\pi}{L} x) \cos(n \frac{\pi}{L} x)dx =
\begin{cases}
0, \mbox{ if } m \neq n \\
L, \mbox{ if } m = n \neq 0 \\
2L, \mbox{ if } m = n = 0
\end{cases}
\end{equation}
\begin{equation} \label{sine-cosine-orthog}
\int_{-L}^L \sin(m \frac{\pi}{L} x) \cos(n \frac{\pi}{L} x)dx = 0
\end{equation}

\noindent where $x, L \in \mathbb{R}$; $L \neq 0$; and $m, n \in \mathbb{N}$. As
mentioned, these integral formulas can be proven using the FTC-2
evaluation procedure. Let's consider the case $m \neq n$ in formula
(\ref{sines-orthog}); the other cases can be proven similarly. When $m \neq
n$, formula (\ref{sines-orthog}) states that $\int_{-L}^L f(x, m, n,
L) dx = 0$ where $f(x, m, n, L) = \sin(m \frac{\pi}{L}
x) \sin(n \frac{\pi}{L} x)$. Using the automatic differentiator, we
can easily prove that the function $g$ defined below is indeed an
antiderivative of $f$ when $m \neq n$:
\[
g(x, m, n, L) = \frac1{2} \left(\frac{\sin \left((m - n) \frac{\pi}{L}
x \right)}{(m - n) \frac{\pi}{L}} - \frac{\sin \left((m +
n) \frac{\pi}{L} x \right)}{(m + n) \frac{\pi}{L}} \right)
\]

\noindent Then, by the FTC-2 theorem,
\begin{flalign*}
\int_{-L}^L f(x, m, n, L) dx &= g(L, m, n, L) - g(-L, m, n, L) \\
&= \frac1{2} \left(\frac{\sin \left((m - n) \pi \right)}{(m -
n) \frac{\pi}{L}} - \frac{\sin \left((m + n) \pi \right)}{(m +
n) \frac{\pi}{L}} \right) - \frac1{2} \left(\frac{\sin \left(- (m -
n) \pi \right)}{(m - n) \frac{\pi}{L}} - \frac{\sin \left(- (m +
n) \pi \right)}{(m + n) \frac{\pi}{L}} \right) \\
&= \frac1{2}(0 - 0) - \frac1{2}(0 - 0) = 0
\end{flalign*}

\section{Sum Rule for Definite Integrals of Indexed Sums}
\label{sec:sum-rule-for-integration}
As part of the Fourier coefficient formalization, we need to formalize
the sum rule for definite integrals of indexed sums, which is stated
as the following theorem:

\begin{theorem}[Sum rule for definite integrals of indexed sums]
Let \{$f_n$\} be a set of real-valued continuous functions on $[a, b]$,
where $n = 0, 1, 2,..., N$. Then
\[
\int_{a}^b \sum_{n=0}^N f_n(x)dx = \sum_{n=0}^N \int_{a}^b f_n(x)dx
\]

\end{theorem}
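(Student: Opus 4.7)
The plan is to prove this by induction on $N$, using the First Fundamental Theorem of Calculus (FTC-1), the Second Fundamental Theorem of Calculus (FTC-2), and the sum rule for differentiation. I treat the index $n$ as a free argument, writing $f_n(x) = f(x, n)$, and apply the free-argument extensions of FTC-1 and FTC-2 together with the zero-arity-function trick of Section \ref{sec:procedure}.

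First I would define, for each fixed $n$, the indefinite integral $F(x, n) = \int_a^x f(t, n) dt$. By the free-argument version of FTC-1, $\partial F(x, n)/\partial x = f(x, n)$ on $[a,b]$, and continuity of $f(\cdot, n)$ makes $F(\cdot, n)$ continuous and real-valued on $[a,b]$; moreover $F(a, n) = 0$ and $F(b, n) = \int_a^b f(x, n) dx$ directly from the definition. I would then set $S_N(x) = \sum_{n=0}^N F(x, n)$ and $T_N(x) = \sum_{n=0}^N f(x, n)$, and show by induction on $N$ that $S_N$ is an antiderivative of $T_N$: the base case $N=0$ is just FTC-1 applied to $F(x, 0)$; in the inductive step, the sum rule for differentiation applied to $S_{N-1}(x) + F(x, N)$ gives $S_N'(x) = S_{N-1}'(x) + \partial_x F(x, N) = T_{N-1}(x) + f(x, N) = T_N(x)$. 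A parallel induction establishes that $T_N$ is continuous and real-valued on $[a,b]$ as a finite sum of such functions.

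With continuity of $T_N$ and the antiderivative relationship $S_N' = T_N$ in hand, the FTC-2 evaluation procedure of Section \ref{sec:procedure} applies to $T_N$ on $[a,b]$ with antiderivative $S_N$, yielding
\[
\int_a^b T_N(x) dx = S_N(b) - S_N(a) = \sum_{n=0}^N \bigl(F(b, n) - F(a, n)\bigr) = \sum_{n=0}^N \int_a^b f(x, n) dx,
\]
where the final equality is immediate from the definition of $F$.

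The main obstacle I expect is the bookkeeping around free arguments and functional instantiation, not the underlying mathematics. The induction must carry along the hypothesis that every relevant free argument is standard in order to invoke the non-classical lemmas about Riemann sums, continuity, and derivatives; this forces me to use the list-of-free-arguments encoding and the zero-arity-constant technique from Section \ref{sec:procedure} when feeding $S_N$ and $T_N$ into the FTC-2 evaluation procedure. A secondary subtlety is that the sum rule for differentiation, which drives the induction, must itself be available for functions with free arguments; if the existing ACL2(r) sum rule is stated only for unary functions, it will need the same free-argument extension treatment that the paper already applies to FTC-1 and FTC-2.
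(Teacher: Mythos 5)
Your proposal is correct and follows essentially the same route as the paper: the paper likewise defines $g_n(x)=\int_a^x f_n(t)\,dt$, applies the free-argument extension of FTC-1 to get $g_n'=f_n$, uses the sum rule for differentiation to conclude $\bigl(\sum_n g_n\bigr)'=\sum_n f_n$, and finishes with FTC-2 evaluated at $b$ and $a$. Your explicit induction on $N$ merely unfolds what the paper packages as ``the sum rule for differentiation,'' and your remarks on carrying the free-argument/standardness bookkeeping match the paper's stated strategy of re-proving FTC-1 with an extra list argument.
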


Note that $f_n(x)$ abbreviates
$f(x, n)$, which contains a free argument, $n$. Kaufmann
\cite{kaufmann:2000} formalized the FTC-1 theorem for generic unary
functions as a non-classical theorem. We re-prove it for generic
functions with an extra argument added, following the method for
extending the limited property of Riemann sums as described in Section
\ref{subsec:riemann-int}. Then, by applying the FTC-2 evaluation
procedure along with the extended version of FTC-1 and the sum rule
for differentiation, the above theorem can be proven as follows:

\begin{proof}
For all $x \in [a, b]$ and $n = 0, 1,\ldots,N$, let
\[
g_n(x) = \int_a^x f_n(t) dt.
\]

By FTC-1, $g_n'(x) = f_n(x)$ for all $x \in [a, b], n = 0, 1,...,N$.

By the sum rule for differentiation, $\left(\sum_{n=0}^N g_n(x)
\right)' = \sum_{n=0}^N g_n'(x) = \sum_{n=0}^N f_n(x)$ for all $x \in
      [a, b]$. Then, by FTC-2,
\begin{flalign*}
\int_{a}^b \sum_{n=0}^N f_n(x)dx &= \sum_{n=0}^N g_n(b) - \sum_{n=0}^N
g_n(a) & \\
&= \sum_{n=0}^N \int_a^b f_n(t) dt - \sum_{n=0}^N \int_a^a
f_n(t) dt = \sum_{n=0}^N \int_{a}^b f_n(x)dx
\end{flalign*}
\end{proof}

\section{Fourier Coefficient Formulas}
\label{sec:fourier-coefficients}
\numberwithin{equation}{section}

From the orthogonality relations and the sum rule for integration, the
Fourier coefficients of periodic functions can be stated as follows:

\begin{theorem}[Fourier coefficient formulas]
Consider the following Fourier sum $f(x)$ for a periodic function with
period $2L$:
\begin{equation} \label{fourier-sum}
f(x) = a_0 + \sum_{n=1}^N \left(a_n \cos (n \frac{\pi}{L} x) + b_n
\sin (n \frac{\pi}{L} x) \right)
\end{equation}

\noindent Then
\begin{equation} \label{a_0}
a_0 = \frac{1}{2L} \int_{-L}^L f(x)dx,
\end{equation}
\begin{equation} \label{a_n}
a_n = \frac{1}{L} \int_{-L}^L f(x) \cos (n \frac{\pi}{L} x)dx,
\end{equation}
\begin{equation} \label{b_n}
b_n = \frac{1}{L} \int_{-L}^L f(x) \sin (n \frac{\pi}{L} x)dx.
\end{equation}

\end{theorem}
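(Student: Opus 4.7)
The classical pencil-and-paper argument is straightforward: integrate the Fourier sum (\ref{fourier-sum}) against $1$, $\cos(n \frac{\pi}{L} x)$, and $\sin(n \frac{\pi}{L} x)$ respectively, distribute the integral across the finite sum, pull the constant coefficients out of each integral, and invoke the orthogonality relations (\ref{sines-orthog}), (\ref{cosines-orthog}), and (\ref{sine-cosine-orthog}) to annihilate every term but one. The plan is to carry out exactly this argument in ACL2(r), once the supporting algebraic lemmas are in place: a constant-multiple rule for definite integrals (an easy consequence of the FTC-2 evaluation procedure applied to $c \cdot f$ with antiderivative $c \cdot g$), the sum rule of Section \ref{sec:sum-rule-for-integration}, and the already-established orthogonality theorems.

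For (\ref{a_0}), I would first apply the sum rule together with the constant-multiple rule to $\int_{-L}^L f(x)\,dx$, reducing the right-hand side to $2L\,a_0 + \sum_{n=1}^N \bigl(a_n \!\int_{-L}^L \cos(n\frac{\pi}{L}x)\,dx + b_n \!\int_{-L}^L \sin(n\frac{\pi}{L}x)\,dx\bigr)$. Each inner integral is the $m = 0$ case of (\ref{cosines-orthog}) or (\ref{sine-cosine-orthog}) and equals zero for $n \geq 1$, so dividing through by $2L$ gives (\ref{a_0}). For (\ref{a_n}) with $1 \leq n \leq N$, multiply (\ref{fourier-sum}) by $\cos(n \frac{\pi}{L} x)$ and integrate: the $a_0$ term vanishes as above, the sine-cosine cross terms vanish by (\ref{sine-cosine-orthog}), and the cosine-cosine terms vanish by (\ref{cosines-orthog}) whenever $m \neq n$, leaving only $a_n \cdot L$. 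The derivation of (\ref{b_n}) is symmetric, using (\ref{sines-orthog}) and (\ref{sine-cosine-orthog}).

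The hard part is not the calculation but the ACL2(r) bookkeeping around the finite indexed sum. Extracting ``only the $m = n$ term survives'' from an expression such as $\sum_{m=1}^N a_m \!\int_{-L}^L \cos(m \frac{\pi}{L} x) \cos(n \frac{\pi}{L} x)\,dx$ is not a single rewrite: it requires an induction on $N$ with a case split on whether the current summation index equals the fixed $n$, invoking the appropriate branch of (\ref{cosines-orthog}) at each step. Moreover, $n$ plays the role of a free argument throughout, so this selector lemma and the preparatory constant-multiple and linearity lemmas must all be threaded through the machinery of Section \ref{sec:procedure} (zero-arity classical stand-ins for free arguments, and pseudo-lambda functional instantiations that are themselves free of such arguments). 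I expect the bulk of the formalization effort to go into this selector lemma and into proving the constant-multiple and additivity rules for definite integrals of functions carrying free arguments.
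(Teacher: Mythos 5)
Your proposal matches the paper's proof, which is stated in one sentence as following "from the orthogonality relations and the sum rule for integration, after applying the definition of $f$" to the coefficient formulas — exactly the expand-distribute-orthogonality argument you describe. Your additional remarks on the selector lemma and the free-argument bookkeeping are a faithful elaboration of what that one-line proof leaves implicit, not a different route.
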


The proof of this theorem is straightforward from the orthogonality
relations and the sum rule for integration, after applying the
definition of $f$ in (\ref{fourier-sum}) to the Fourier coefficient
formulas (\ref{a_0}), (\ref{a_n}), and (\ref{b_n}). Consequently, we
can easily derive the following corollary. This is known as the
uniqueness of Fourier sums:

\begin{corollary}[Uniqueness of Fourier sums]
Let
\[
f(x) = a_0 + \sum_{n=1}^N \left(a_n \cos (n \frac{\pi}{L} x) + b_n
\sin (n \frac{\pi}{L} x) \right)
\]

\noindent and
\[
g(x) = A_0 + \sum_{n=1}^N \left(A_n \cos (n \frac{\pi}{L} x) + B_n
\sin (n \frac{\pi}{L} x) \right)
\]

\noindent Then $f = g \Leftrightarrow
\begin{cases}
a_0 = A_0 \\
a_n = A_n, \mbox{ for all } n = 1, 2,..., N \\
b_n = B_n, \mbox{ for all } n = 1, 2,..., N
\end{cases}
$

\end{corollary}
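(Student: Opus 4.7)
The plan is to reduce the corollary directly to the previously proved Fourier coefficient formulas theorem, so almost no new mathematical content is required. The backward direction $(\Leftarrow)$ is immediate: substituting equal coefficients into the same closed form yields two syntactically identical sums, hence $f(x) = g(x)$ for every $x$. I would dispatch this in ACL2(r) by a simple rewrite, since the defining body of $f$ and $g$ match term-for-term once $a_0 = A_0$, $a_n = A_n$, $b_n = B_n$ are substituted.

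For the forward direction $(\Rightarrow)$, assume $f = g$ pointwise. Apply the Fourier coefficient formulas (the previous theorem) twice, once to $f$ with coefficients $(a_0, a_n, b_n)$ and once to $g$ with coefficients $(A_0, A_n, B_n)$. This yields
\begin{align*}
a_0 &= \tfrac{1}{2L}\int_{-L}^{L} f(x)\,dx, & A_0 &= \tfrac{1}{2L}\int_{-L}^{L} g(x)\,dx,\\
a_n &= \tfrac{1}{L}\int_{-L}^{L} f(x)\cos\!\bigl(n\tfrac{\pi}{L}x\bigr)dx, & A_n &= \tfrac{1}{L}\int_{-L}^{L} g(x)\cos\!\bigl(n\tfrac{\pi}{L}x\bigr)dx,\\
b_n &= \tfrac{1}{L}\int_{-L}^{L} f(x)\sin\!\bigl(n\tfrac{\pi}{L}x\bigr)dx, & B_n &= \tfrac{1}{L}\int_{-L}^{L} g(x)\sin\!\bigl(n\tfrac{\pi}{L}x\bigr)dx.
\end{align*}
Since $f(x) = g(x)$ on $[-L, L]$, the corresponding integrands agree, and therefore the integrals agree. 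Hence $a_0 = A_0$, and $a_n = A_n$, $b_n = B_n$ for every $n \in \{1, \dots, N\}$.

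The key auxiliary fact I would need is that two Riemann integrals of functions that agree pointwise on $[-L, L]$ are themselves equal. In the ACL2(r) formalization this is not entirely automatic, because the integral is defined as the standard part of a Riemann sum on a nonstandard fine partition; however, if two integrand functions agree on the relevant interval, then their partition maps are equal as lists, the Riemann sums coincide, and so do their standard parts. I would state this as a small lemma about \texttt{riemann-f} under pointwise equality of the integrand (restricted to the partition), then instantiate the Fourier coefficient theorem for the two choices of coefficients.

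The main obstacle is not mathematical but organizational: the Fourier coefficient theorem was proven for a single, generic periodic function constrained by an encapsulate, so applying it twice (to $f$ and to $g$) requires two functional instantiations in which the constrained function symbol is replaced in turn by the two concrete Fourier-sum bodies, each carrying its own coefficient constants. Using the zero-arity-constant technique from Section~\ref{sec:procedure} for the coefficients $a_0, a_n, b_n$ and $A_0, A_n, B_n$, I would introduce two parallel encapsulates, prove the Fourier coefficient identities for each, and then combine them by transitivity of equality to conclude the corollary.
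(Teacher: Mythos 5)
Your proof is correct and follows essentially the same route as the paper's: the forward direction applies the Fourier coefficient formulas to both $f$ and $g$ and uses equality of the integrals of equal integrands, while the backward direction is immediate from substituting equal coefficients (the paper dispatches it by induction on $n$). Your additional remarks about the Riemann-sum lemma and the double functional instantiation are reasonable mechanization details that the paper leaves implicit.
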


\begin{proof}
($\Rightarrow$) Follows immediately from the Fourier coefficient
  formulas:
\[
a_0 = \frac{1}{2L} \int_{-L}^L f(x) dx
= \frac{1}{2L} \int_{-L}^L g(x) dx = A_0,
\]
\[
a_n = \frac{1}{L} \int_{-L}^L f(x) \cos (n \frac{\pi}{L} x) dx
= \frac{1}{L} \int_{-L}^L g(x) \cos (n \frac{\pi}{L} x) dx = A_n,
\]
\[
b_n = \frac{1}{L} \int_{-L}^L f(x) \sin (n \frac{\pi}{L} x) dx
= \frac{1}{L} \int_{-L}^L g(x) \sin (n \frac{\pi}{L} x) dx = B_n.
\]

($\Leftarrow$) Obviously true by induction on $n$.
\end{proof}

\section{Sum Rule for Definite Integrals of Infinite Series}
\label{sec:int-inf}
\numberwithin{equation}{section}

Our formalization of the sum rule for integration in
Section \ref{sec:sum-rule-for-integration} only applies to finite
sums. However, Fourier series can be infinite. Thus, the sum rule for
integration needs to be extended to infinite series; we do so in this section.

Our basic result is the following sum rule for integrals of infinite
series, formalized using non-standard analysis.  (We define uniform
convergence below.)

\begin{theorem}\label{int-inf}  Suppose that $\{f_n(x)\}$ is a
sequence of real-valued continuous functions whose sequence of partial
sums converges uniformly to a continuous limit function on a given
interval. Then
\begin{equation}
\int_a^b \mbox{st} \left(\sum_{n=0}^{H_0} f_n(x) \right) dx =
\mbox{st} \left(\sum_{n=0}^{H_1} \int_a^b f_n(x) dx \right)
\end{equation}

\noindent for all {\it infinitely large} natural numbers $H_0$ and $H_1$.
\end{theorem}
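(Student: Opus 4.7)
The plan is to reduce both sides of the claim to a single expression, namely the integral of the uniform limit function. Let $S_N(x) = \sum_{n=0}^{N} f_n(x)$ denote the $N$-th partial sum, and let $S(x)$ be the continuous function on $[a,b]$ to which the partial sums converge uniformly.

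The first step is to invoke the finite-sum integration rule of Section~\ref{sec:sum-rule-for-integration}, which gives, for every natural number $N$,
\[
\sum_{n=0}^{N} \int_a^b f_n(x)\, dx = \int_a^b S_N(x)\, dx.
\]
Since that rule is a classical theorem parameterized by $N$, it applies at any infinitely large $N$ as well, so the right-hand side of Theorem~\ref{int-inf} becomes $\mbox{st}\left(\int_a^b S_{H_1}(x)\, dx\right)$. I would then appeal to the non-standard form of uniform convergence: for every infinite natural number $H$, $|S_H(x) - S(x)|$ is infinitesimal for all $x \in [a,b]$. A short argument bounding a Riemann sum of the difference by $(b-a)$ times an infinitesimal yields
\[
\int_a^b S_H(x)\, dx \approx \int_a^b S(x)\, dx,
\]
so the right-hand side reduces to $\int_a^b S(x)\, dx$.

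For the left-hand side, observe that for each standard $x \in [a,b]$, $S_{H_0}(x) \approx S(x)$, and since $S(x)$ is standard this gives $\mbox{st}(S_{H_0}(x)) = S(x)$. Because the Riemann integral of a continuous function is determined by the integrand's standard values at standard points, via the small-partition definition reviewed in Section~\ref{subsec:riemann-int}, we obtain $\int_a^b \mbox{st}(S_{H_0}(x))\, dx = \int_a^b S(x)\, dx$. The two sides of the theorem thus agree.

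The main obstacle I anticipate is formalizing the integral-closeness lemma $\int_a^b S_H(x)\, dx \approx \int_a^b S(x)\, dx$ and setting up the Riemann integral machinery so that $S_N(x)$ is acceptable as an integrand parameterized by a free argument $N$ that may be non-standard. This demands the free-argument infrastructure from Section~\ref{sec:procedure}, together with a careful bound on Riemann sums of uniformly infinitesimal continuous functions. I also expect the \emph{overspill principle} advertised in the abstract to be essential, either in translating the classical uniform convergence hypothesis into its non-standard formulation at every infinite $H$, or in lifting the finite sum rule from standard to non-standard indices in a way that ACL2(r) can follow mechanically.
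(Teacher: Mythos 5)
Your outline is essentially the route the paper takes: the paper never spells out a proof of Theorem~\ref{int-inf} itself, but Figure~1 and the surrounding discussion make clear that it is obtained exactly as you propose --- apply the finite sum rule of Section~\ref{sec:sum-rule-for-integration} to rewrite the right-hand side as the integral of the partial sum $S_{H_1}$, then use uniform convergence to a continuous limit function to identify both sides with the integral of that limit. The only small correction is about where overspill enters: the paper uses it not to lift the finite sum rule to infinite indices or to reformulate uniform convergence (which is already stated non-standardly), but to establish the theorem's hypothesis in the first place, via Dini's theorem under Condition~1 and the continuity of the limit function under Condition~2.
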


{\bf Remark}. The conclusion above is equivalent to the
following formula using the {\it epsilon-delta} definition of
limit \cite{keisler:1985}:
\begin{equation} \label{int-inf-alt}
\int_a^b \lim_{N \to \infty} \left(\sum_{n=0}^N f_n(x) \right) dx =
\lim_{N \to \infty} \left(\sum_{n=0}^N \int_a^b f_n(x) dx \right)
\end{equation}

We turn now to two variants of this theorem that relax its
requirements.  We start by recalling well-known formulations of
convergence in non-standard analysis.

\begin{itemize}

\item {\bf Pointwise convergence:} Suppose \{$f_n$\} is a sequence of
  functions defined on an interval $I$.  The sequence \{$f_n$\}
  converges {\it pointwise} to the {\it limit function} $f$ on the
  interval $I$ if $f_H(x) \approx f(x)$ for all {\it standard} $x \in
  I$ and for all infinitely large natural numbers $H$.

\item {\bf Uniform convergence:} Suppose \{$f_n$\} is a sequence of
  functions defined on an interval $I$.  The sequence \{$f_n$\}
  converges {\it uniformly} to the {\it limit function} $f$ on the
  interval $I$ if $f_H(x) \approx f(x)$ for all $x \in I$ (both {\it
    standard} and {\it non-standard}) and for all infinitely large
  natural numbers $H$.

\end{itemize}

Clearly, uniform convergence is stronger than pointwise convergence. A
sequence that converges uniformly to a limit function also converges
pointwise to that function, but the reverse is not guaranteed. We meet
the hypothesis of Theorem~\ref{int-inf} --- uniform convergence to a
continuous limit function --- in two ways corresponding to two
different conditions, as follows.  Note: Only the second condition is
relevant to Fourier series, but the first also leads to an interesting
result.

\begin{itemize}

\item {\bf Condition 1:} A {\it monotone} sequence of partial sums of
  real-valued continuous functions {\it converges pointwise} to a {\it
  continuous} limit function on the {\it closed and bounded} interval
  of interest.

\item {\bf Condition 2:} A sequence of partial sums of real-valued
  continuous functions {\it converges uniformly} to a limit function
  on the interval of interest.

\end{itemize}

The following theorem~\cite{luxemburg:1971} shows that Condition 1
implies the hypothesis of Theorem~\ref{int-inf}.

\begin{theorem}[Dini Uniform Convergence Theorem]
\label{thm:dini}
A monotone sequence of continuous functions \{$f_n$\} that converges
pointwise to a continuous function $f$ on a closed and bounded
interval $[a, b]$ is uniformly convergent.
\end{theorem}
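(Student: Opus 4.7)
The plan is to reduce the failure of uniform convergence to a contradiction with pointwise convergence at a single standard point of $[a, b]$, exploiting the standard-part map on the closed bounded interval together with the monotonicity hypothesis. Assume without loss of generality that $\{f_n\}$ is monotonically increasing to $f$ (the decreasing case is symmetric), and set $g_n(x) = f(x) - f_n(x)$, so that $\{g_n\}$ is a sequence of continuous non-negative functions that decreases in $n$ for each fixed $x$ and converges pointwise to $0$. Uniform convergence in the non-standard formulation is then the assertion that $g_H(x) \approx 0$ for every $x \in [a, b]$, standard or not, and every infinite natural number $H$; this is what I aim to establish.

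Suppose for contradiction that uniform convergence fails. Then there exist a standard $\varepsilon > 0$, an infinite $H$, and a point $x_H \in [a, b]$ with $g_H(x_H) \geq \varepsilon$. Since $[a, b]$ is closed and bounded, $x_H$ is \texttt{i-limited} and its standard part $y = \mathrm{st}(x_H)$ lies in $[a, b]$. Monotonicity of $\{g_n\}$ gives $g_N(x_H) \geq g_H(x_H) \geq \varepsilon$ for every standard $N$, since every standard $N$ is at most $H$; and standard continuity of the classical function $g_N$ transfers this bound across the infinitesimal gap $x_H \approx y$, yielding $g_N(y) \approx g_N(x_H) \geq \varepsilon$, and therefore $g_N(y) \geq \varepsilon/2$ (in fact $g_N(y) \geq \varepsilon$, since $g_N(y)$ is a standard real). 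This inequality now holds for every standard $N$.

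The overspill principle completes the contradiction: the predicate ``$g_n(y) \geq \varepsilon/2$'' is classical in $n$ (as $g_n$ is a classical function and $y$ and $\varepsilon$ are standard), and it holds for every standard $n$, so overspill delivers some infinite $K$ for which $g_K(y) \geq \varepsilon/2$. But pointwise convergence of $\{g_n\}$ to $0$ at the standard point $y$ demands $g_K(y) \approx 0$ for every infinite $K$, contradicting $g_K(y) \geq \varepsilon/2$. The main obstacle I anticipate in the ACL2(r) formalization is not this logical skeleton but its careful packaging: the failure hypothesis must be introduced by an encapsulate event that produces the witness pair $(H, x_H)$ as constrained functions so that subsequent lemmas can access them, and the overspill application requires arranging that the chosen predicate is verifiably classical in its index while $y$ is treated as a standard parameter. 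A secondary subtlety is the WLOG reduction to monotonically increasing sequences, which will either need to be performed by a symmetric duplicate proof or by folding both directions into a single invariant on $|f - f_n|$; once these pieces are in place, the remaining appeals (standard-part reasoning, continuity of the standard $g_N$, and pointwise convergence at the standard $y$) are routine uses of infrastructure already in the paper.
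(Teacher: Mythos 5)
Your proof is correct, but it takes a genuinely different route from the paper's. The paper argues \emph{directly}: for arbitrary $x\in[a,b]$ and infinite $H$ it chains $f(x)\approx f(\mathrm{st}(x))\approx f_H(\mathrm{st}(x))$ (continuity of $f$ plus pointwise convergence at the standard point $\mathrm{st}(x)$), applies overspill to the predicate $|f_n(\mathrm{st}(x))-f_n(x)|<\tfrac{1}{n+1}$ to obtain a non-standard $k$ below which $f_H(\mathrm{st}(x))\approx f_H(x)$, and disposes of $H>k$ by the squeeze $f_k(x)\le f_H(x)\le f(x)$ from monotonicity. You instead argue \emph{by contradiction}: a putative failure $g_H(x_H)\ge\varepsilon$ is pushed down to all standard indices by monotonicity, transferred to the standard point $y=\mathrm{st}(x_H)$ by continuity of the standard functions $g_N$, and then pushed back up to an infinite index by overspill on $g_n(y)\ge\varepsilon/2$, contradicting pointwise convergence at $y$. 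The two proofs use the same ingredients (standard-part map on the closed bounded interval, continuity of $f$ and of the $f_n$, pointwise convergence at a standard point, monotonicity, overspill) but deploy overspill on different predicates and in opposite directions. Yours is essentially the classical compactness proof of Dini transplanted into non-standard language, and is arguably the more familiar argument; the paper's direct version buys something concrete for the ACL2(r) formalization, namely that it never has to Skolemize the negation of the non-classical statement ``uniform convergence fails'' --- obtaining the witnesses $(H,x_H,\varepsilon)$ as usable constants is exactly the kind of step that is delicate in ACL2(r), as you correctly flag. Two small points to make explicit in your writeup: (i) the monotonicity step needs $g_n\ge 0$ at the possibly non-standard point $x_H$ (otherwise $g_N(x_H)\ge g_H(x_H)$ gives nothing useful if $g_H(x_H)\le-\varepsilon$); this follows from monotone increase plus pointwise convergence via transfer, but it is an extra lemma, not a hypothesis; and (ii) your overspill predicate must be classical in \emph{all} of its arguments, so $y$ and $\varepsilon$ should be threaded through as parameters of the predicate (as the paper does with $x_0$ and $x$) rather than baked in as fixed non-classically-obtained constants.
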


Our proof of Dini's theorem relies on the {\it
overspill principle} from non-standard analysis \cite{goldblatt:1998,
keisler:1976}. Thus, we now discuss our formalization of this
principle in ACL2(r)~\cite{overspill}.

{\bf Overspill principle (weak version):} Let $P(n, x)$ be a classical
predicate. Then
\begin{equation} \label{overspill-1}
\forall x. \left( \left( \forall^{st} n \in \mathbb{N}. P(n, x)
\right) \Rightarrow \exists^{\neg st} k \in \mathbb{N}. P(k, x)
\right).
\end{equation}

\noindent In words, if a classical predicate $P$ holds for all standard natural
numbers $n$, $P$ must be hold for some non-standard natural number
$k$. By applying this principle, we can even come up with a stronger
statement as follows:

{\bf Overspill principle (strong version):} Let $P(n, x)$ be a
classical predicate. Then
\begin{equation} \label{overspill-2}
\forall x. \left( \left( \forall^{st} n \in \mathbb{N}. P(n, x)
\right) \Rightarrow \exists^{\neg st} k \in \mathbb{N}. \forall m
\in \mathbb{N}. \left( m \leq k \Rightarrow P(m, x) \right) \right).
\end{equation}

\noindent In words, if a classical predicate $P$ holds for all standard natural
numbers $n$, there must exist some non-standard natural number $k$
such that $P$ holds for all natural numbers less than or equal to
$k$. (\ref{overspill-2}) can be derived from (\ref{overspill-1})
through a classical predicate $P^*(n, x)$ defined in terms of $P(n,
x)$ as follows:
\begin{Verbatim}
(defun P* (n x)
  (if (zp n)
      (P 0 x)
    (and (P n x)
         (P* (1- n) x))))
\end{Verbatim}

\noindent The proof of (\ref{overspill-2}) now proceeds as follows.

\begin{enumerate}

\item Fix $x$ and assume $\left( \forall^{st} n \in \mathbb{N}. P(n, x) \right)$.

\item Then, we can show that $\left( \forall^{st} m \in
\mathbb{N}. P^*(m, x) \right)$.

\item Applying (\ref{overspill-1}) to $P^*$:
$\exists^{\neg st} k \in \mathbb{N}. P^*(k, x)$.

\item From the definition of $P^*$: $\exists^{\neg st} k \in
\mathbb{N}. \forall m \in
\mathbb{N}. \left( m \leq k \Rightarrow P(m, x) \right)$.

\end{enumerate}

We formalize (\ref{overspill-2}) for a generic classical predicate
$P(n, x)$ in ACL2(r) and provide the {\tt overspill} utility, which
automates the application of (\ref{overspill-2}). In particular, the
user needs only to define a classical predicate $P_1$ and then call
the {\tt overspill} macro with the input $P_1$ so that
(\ref{overspill-2}) will be applied to $P_1$ automatically via a
functional instantiation. We can thus apply the overspill
principle (\ref{overspill-2}) to prove Dini's theorem as shown
below.

\medskip
\noindent
{\it Proof of Theorem \ref{thm:dini}.} Without loss of generality,
assume \{$f_n$\} is {\it monotonically increasing}. We want to prove
$f(x) \approx f_H(x)$ for all $x \in [a, b]$ and for all infinitely
large $H \in \mathbb{N}$.

Fact: If $x \in [a, b]$ then st$(x) \in [a, b]$ (note that this is
only true on {\it closed and bounded intervals}).

(A) Since st$(x)$ is standard and $x \approx \mbox{st} (x)$, $f(x) \approx
f(\mbox{st} (x))$ by the continuity of $f$.

(B) Since st$(x)$ is standard, $f(\mbox{st} (x)) \approx f_H(\mbox{st}
(x))$ by the pointwise convergence of \{$f_n$\}.

We will make the following two claims.

Claim 1. For some non-standard $k \in \mathbb{N}$, we have: for all
$H \leq k$, $f_H(\mbox{st} (x)) \approx f_H(x)$.

Claim 2. Suppose $k \in \mathbb{N}$ such that $f_k(x) \approx
f(x)$.  Then for all $H > k$, $f(x) \approx f_H(x)$.

For the moment, assume both claims.  Choosing $k$ according to Claim
1, then by the transitivity of {\tt i-close} and Steps (A) and (B)
above, we have $f(x) \approx f_H(x)$ for all infinitely large $H \leq
k$.  Applying Claim 2 to that same $k$ takes care of $H > k$, so we
are done once we prove the claims.

To prove Claim 1, we must find a non-standard $k$ such that
$f_H(\mbox{st} (x)) \approx f_H(x)$ for all $H \leq k$. We first
observe that by the continuity of \{$f_n$\}, we have $f_n(\mbox{st}
(x)) \approx f_n(x), \forall x \in [a, b]$ and $\forall^{st}
n \in \mathbb{N}$.
We apply the overspill principle --- which requires a classical
predicate --- to the following classical predicate $P(n, x_0, x)$.
\[
P(n, x_0, x) \equiv |f_n(x_0) - f_n(x)| < \frac{1}{n + 1}
\]

\noindent If $x_0, x \in [a, b]$, $x_0$ is standard, and $x_0 \approx x$, then
$P(n, x_0, x)$ holds for all {\it standard} $n \in \mathbb{N}$ since
$f_n(x_0) - f_n(x) \approx 0$ by the continuity of \{$f_n$\}. Hence,
by the overspill principle (\ref{overspill-2}), there exists a {\it
non-standard} $k \in \mathbb{N}$ s.t. $P(m, x_0, x)$ holds for all
$m \in \mathbb{N}$ and $m \leq k$.
Now suppose that $H$ is infinitely large but $H \leq k$.  Then
$f_H(x_0) \approx f_H(x)$ since

\[
0 \leq |f_H(x_0) - f_H(x)| < \frac{1}{H + 1} \approx 0.
\]

\noindent Let's pick $x_0$ to be st($x$); then $f_H(\mbox{st} (x)) \approx
f_H(x)$, concluding the proof of Claim 1.

To prove Claim 2, by hypothesis pick $k \in \mathbb{N}$ such that
$f_k(x) \approx f(x)$, and assume $H > k$. Then $f_k(x) \leq
f_H(x) \leq f(x)$ by the increasing monotonicity of \{$f_n$\}.
Hence, $0 \leq |f(x) - f_H(x)| \leq |f(x) - f_k(x)| \approx 0$.
Thus, $f(x) \approx f_H(x)$, which concludes the proof of Claim 2 and
also the proof of Theorem~\ref{thm:dini}. \hfill $\square$

\medskip

Dini's theorem shows that pointwise convergence of a sequence of
continuous functions on a closed and bounded interval also implies its
uniform convergence if the sequence is monotonic and the limit
function is continuous. Unfortunately, it is not applicable to Fourier
series since Fourier series are not required to be monotonic. As a
result, Fourier series cannot meet the requirement for our proof of
(\ref{int-inf}) from Condition 1. In fact, Fourier series can satisfy
Condition 2 under suitable criteria \cite{jackson:1934}. Then, from
Condition 2, we need to prove that the limit function is continuous in
order to meet the requirement for our proof of (\ref{int-inf}). This
can be proven by applying the overspill principle.

\medskip

\begin{theorem}
\label{thm:continuity-limit-function}
Suppose that a sequence of continuous functions \{$f_n$\} converges
uniformly to a limit function $f$ on an interval $I$. Then $f$ is also
continuous on $I$.
\end{theorem}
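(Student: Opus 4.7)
The plan is to prove continuity of $f$ at each standard point $x_0 \in I$ by establishing $f(x_0) \approx f(x)$ for every $x \in I$ with $x \approx x_0$, using the overspill principle to transfer continuity of the $f_n$ from standard indices to a non-standard one, and then exploiting uniform convergence (which applies at non-standard $x$ as well as standard $x$) to squeeze $f$ between $f_k$ at two nearby arguments.

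First I would fix a standard $x_0 \in I$ and an arbitrary $x \in I$ with $x \approx x_0$, where $x$ may itself be non-standard. Each $f_n$ is a classical continuous function, so for every standard $n \in \mathbb{N}$ we have $f_n(x_0) \approx f_n(x)$. To promote this to a single non-standard index I would introduce the classical predicate
\[
P(n, x_0, x) \equiv |f_n(x_0) - f_n(x)| < \frac{1}{n+1},
\]
which holds at every standard $n$ because $f_n(x_0) - f_n(x)$ is infinitesimal while $1/(n+1)$ is a positive standard real. Applying the strong overspill principle~(\ref{overspill-2}) yields a non-standard $k \in \mathbb{N}$ such that $P(m, x_0, x)$ holds for every $m \leq k$; in particular, $|f_k(x_0) - f_k(x)| < 1/(k+1)$, and since $k$ is non-standard the right-hand side is infinitesimal, giving $f_k(x_0) \approx f_k(x)$.

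Next I would chain three approximations together. Because $k$ is non-standard (hence infinitely large) and $\{f_n\}$ converges uniformly to $f$ on $I$, we get $f_k(x_0) \approx f(x_0)$ and $f_k(x) \approx f(x)$; crucially, uniform convergence supplies the second approximation even though $x$ may be non-standard. Combining these with $f_k(x_0) \approx f_k(x)$ from the overspill step, and invoking the transitivity of {\tt i-close}, I obtain $f(x_0) \approx f(x)$, which is precisely the non-standard characterization of continuity of $f$ at the standard point $x_0$. In ACL2(r) this would be packaged by defining the appropriate classical predicate, invoking the {\tt overspill} macro to instantiate~(\ref{overspill-2}), and then discharging the remaining approximations from the uniform-convergence and continuity hypotheses.

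The main obstacle I anticipate is that continuity of $f_n$ is immediately available only for standard $n$, so one cannot simply take $H$ infinitely large and use continuity of $f_H$ directly in a non-standard argument; the overspill step is what bridges this gap, and the quantitative bound $1/(n+1)$ is chosen precisely so that the property is meaningful and true at every standard $n$ yet automatically forces infinitesimal closeness at non-standard indices. A secondary subtlety is that $x$ may be non-standard, so genuine uniform convergence is needed to conclude $f_k(x) \approx f(x)$; this is exactly why the theorem fails if uniform convergence is replaced by pointwise convergence without further hypotheses.
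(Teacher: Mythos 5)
Your proof is correct and follows essentially the same route as the paper: overspill applied to the classical predicate $|f_n(x_0)-f_n(x)| < \frac{1}{n+1}$ to obtain a non-standard index $k$ with $f_k(x_0) \approx f_k(x)$, followed by uniform convergence at the (possibly non-standard) point $x$ and transitivity of {\tt i-close}. The only difference is that you use just the single witness $k$, whereas the paper additionally establishes $f_H(x_0) \approx f_H(x)$ for \emph{every} infinitely large $H$ (splitting into the cases $H \leq k$ and $H > k$), which is slightly more than the conclusion requires.
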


\begin{proof}
The goal is to prove $f(x_0) \approx f(x)$ for all $x_0, x \in
I$ such that $x_0$ is standard and $x_0 \approx x$. By the uniform convergence
of \{$f_n$\}, we have $f(x_0) \approx f_H(x_0)$ and $f(x) \approx
f_H(x)$ for all infinitely large $H \in \mathbb{N}$. If we can show
that $f_H(x_0) \approx f_H(x)$, then we obtain our goal by the
transitivity of $\approx$. By applying the overspill principle in the
same way as in our proof of Theorem \ref{thm:dini}, we claim that
there must exist a non-standard $k \in \mathbb{N}$
s.t. $f_H(x_0) \approx f_H(x)$ if $H \leq k$. When $H > k$, we know
that $f_H(x_0) \approx f_k(x_0)$ (since they are both {\tt i-close}
to $f(x_0)$ by the uniform convergence of \{$f_n$\}) and similarly
$f_H(x) \approx f_k(x)$. Thus, $f_H(x_0) \approx f_k(x_0) \approx
f_k(x) \approx f_H(x)$ and we are done.
\end{proof}

From Condition 2 and Theorem \ref{thm:continuity-limit-function}, in
order to apply the sum rule for integration (\ref{int-inf}) to
infinite Fourier series, we need to prove that the Fourier series
converge uniformly to limit functions. As mentioned above, this is
provable under suitable criteria \cite{jackson:1934}.

\section{Conclusions}
\label{sec:conclusions}
We described in this paper our extension of the framework for formally
evaluating definite integrals of real-valued continuous functions
containing free arguments, using FTC-2. Along with this extension, we
also presented our technique for handling the occurrence of free
arguments in pseudo-lambda expressions of functional
instantiations. Using the extended framework, we showed how to prove
the orthogonality relations of trigonometric functions as well as the
sum rule for definite integrals of indexed sums. These properties were
then applied to prove the Fourier coefficient formulas and
consequently used to derive the uniqueness of Fourier sums as a corollary.

We also presented our formalization of the sum rule for definite
integrals of infinite series under two different conditions. Along
with this task, we formalized the overspill principle and provided the
{\tt overspill} utility that automates the application of the overspill
principle, thus strengthening the reasoning capability of non-standard
analysis in ACL2(r). Our proofs of Dini's theorem and the continuity
of the limit function as described in Section \ref{sec:int-inf}
illustrate this capability.

Some possible areas of future work are worth mentioning. First, the
automatic differentiator needs to be extended to support partial
differentiation. The current AD has limited support for automating partial
differentiation.
Although we extended the AD to support partial derivative
registrations of binary functions, this extension is still very
limited for automatic differentiation. In particular, our extension
imposes a constraint on the free argument of binary functions that
either its symbolic name must be $arg0$ or it has to be a constant. As
a result, the AD cannot be applied to expressions containing several
binary functions with different free arguments. Our current solution
in this case is to break those expressions into smaller expressions
such that the AD can be applied directly to these smaller expressions,
and then manually combine them to get the final result for the
original expressions. Future work might make the partial
differentiation process more automatic.  Another possibility for
future work is to prove convergence of the Fourier series for a
periodic function, under sufficient conditions.

In summary, we have developed and extended frameworks for mechanized
continuous mathematics, which we applied to obtain results about
Fourier series and an elegant proof of Dini's theorem.  We are
confident that our frameworks can be applied to future work on Fourier
series and, more generally, continuous mathematics, to be carried out
in ACL2(r).

\section*{Acknowledgements}

We thank Ruben Gamboa for useful discussions.  We also thank the
reviewers for useful comments.  This material is based
upon work supported by DARPA under Contract No. N66001-10-2-4087.

\nocite{*}
\bibliographystyle{eptcs}
\bibliography{generic}
\end{document}